\documentclass{vldb}
\makeatletter
\let\old@citex\@citex
\makeatother
\usepackage[english]{babel}
\makeatletter
\let\@citex\old@citex
\makeatother
\usepackage{mathptmx,helvet}

\usepackage{times,balance,graphicx,amssymb,amsmath,subfigure,url}
\usepackage{comment}

\newtheorem{lemma}{Lemma}

\begin{document}

\title{Efficient Processing of Very Large Graphs in a Small Cluster}

\author{
{Da Yan$^1$,\ \ Yuzhen Huang$^2$,\ \ James Cheng$^3$,\ \ Huanhuan Wu$^4$}
\vspace{1.6mm}\\
\fontsize{10}{10}\selectfont\itshape\rmfamily Department of Computer Science and Engineering, The Chinese University of Hong Kong\\
\fontsize{9}{9}\selectfont\ttfamily\upshape \{$^1$yanda, $^2$yzhuang, $^3$jcheng, $^4$hhwu\}@cse.cuhk.edu.hk
}

\maketitle

\begin{abstract}
Inspired by the success of Google's Pregel, many systems have been developed recently for iterative computation over big graphs. These systems provide a user-friendly vertex-centric programming interface, where a programmer only needs to specify the behavior of one generic vertex when developing a parallel graph algorithm. However, most existing systems require the input graph to reside in memories of the machines in a cluster, and the few out-of-core systems suffer from problems such as poor efficiency for sparse computation workload, high demand on network bandwidth, and expensive cost incurred by external-memory join and group-by.

In this paper, we introduce the GraphD system for a user to process very large graphs with ordinary computing resources. GraphD fully overlaps computation with communication, by streaming edges and messages on local disks, while transmitting messages in parallel. For a broad class of Pregel algorithms where message combiner is applicable, GraphD eliminates the need of any expensive external-memory join or group-by. These key techniques allow GraphD to achieve comparable performance to in-memory Pregel-like systems without keeping edges and messages in memories. We prove that to process a graph $G=(V, E)$ with $n$ machines using GraphD, each machine only requires $O(|V|/n)$ memory space, allowing GraphD to scale to very large graphs with a small cluster. Extensive experiments show that GraphD beats existing out-of-core systems by orders of magnitude, and achieves comparable performance to in-memory systems running with enough memories.
\end{abstract}

\section{Introduction}\label{sec:intro}
Google's Pregel~\cite{pregel} and Pregel-like systems (e.g., Giraph~\cite{giraph}, GraphLab~\cite{graphlab,powergraph} and Pregelix~\cite{pregelix}) have become popular for iterative computation over big graphs recently, with numerous applications including social network analysis~\cite{socialnet}, webpage ranking~\cite{pregel}, and graph matching~\cite{match1,match2}. These systems provide a user-friendly programming model, where a user thinks like a vertex when developing a parallel graph algorithm.

However, most existing systems require an entire input graph to reside in memories, as well as the huge amounts of messages generated during the computation. While this assumption is proper for big companies and researchers with powerful computing resources, it neglects the need of an average user who wants to process very large graphs, such as small businesses and researchers that cannot afford a large cluster. For example, \cite{pregelix} reported that in the Giraph user mailing list there are 26 cases (among 350 in total) of out-of-memory related issues from March 2013 to March 2014.

Another problem with in-memory systems is that, they often use much more memory than the actual size of the input graph in order to keep the supporting data structures, vertex states, edges and messages. For example, \cite{mocgraph} reported that to process a graph dataset that takes only 28GB disk space, Giraph and GraphLab need 370GB and 800GB memory space, respectively; and when memory resources become exhausted, the performance of Giraph degrades seriously while GraphLab simply crashes. Thus, even using a cluster with terabytes of memory space, we may only be able to process a graph of a few hundred GB. However, graphs in real-world applications can easily exceed this size, such as web graphs and the Semantic Web. In fact, the file size of {\em ClueWeb}, a web graph used in our experiments, already exceeds 400GB, let alone those web graphs maintained by existing search engine companies, as well as other large graphs from online social networks and telecom operators.

One may, of course, increase the memory space in a cluster by adding more machines. However, since there are $\binom{n}{2}$ communication pairs in a cluster of $n$ machines and they all contend for the shared network resource, the communication overhead outweighs the increased computing power when $n$ becomes too large.

Due to the above reasons, researchers have recently developed out-of-core systems for processing big graphs. For example, Pregel-ix~\cite{pregelix} %note the split
models the semantics of Pregel by relational operations like join and group-by, and leverages a general-purpose dataflow engine for out-of-core execution. Thus, it requires expensive external-memory join and group-by, which degrade the performance of distributed graph processing. Giraph also supports out-of-core execution, but \cite{pregelix} reported that it does not function properly.

Other out-of-core systems adopt the edge-centric Gather-Apply-Scatter model of PowerGraph~\cite{powergraph}, which is a special case of the vertex-centric model with a narrower application scope. Specifically, a vertex can only communicate with its adjacent vertex along an adjacent edge, which makes it unsuitable for algorithms that require pointer jumping (or path doubling)~\cite{ppa}. Moreover, the Gather-Apply phase is essentially message combining in Pregel. We further categorize these systems into two types as follows.

\vspace{2mm}

\noindent{\em Type~1: Single-Machine Systems.} Such systems include GraphChi~\cite{graphchi}, X-Stream~\cite{xstream}, and VENUS~\cite{venus}, which are designed to process a graph on a single PC. These systems require the IDs of vertices in a graph to be numbered as $1, 2, \cdots, |V|$, and vertices are partitioned into $P$ disjoint ID intervals, so that each partition can be loaded to memory for processing at a time. Besides the strict requirement on vertex ID format, these systems are also inefficient if only a small fraction of vertices need to perform computation in an iteration. This is because a whole partition needs to be loaded for processing as along as one vertex in it needs computation.

\vspace{2mm}

\noindent{\em Type~2: Distributed Systems.} We are only aware of one such system, Chaos~\cite{chaos}, which scales X-Stream out to multiple machines. However, Chaos represents the other extreme of hardware requirement with respect to single-PC systems. Specifically, its computation model is built upon the assumption that network bandwidth far outstrips storage bandwidth. In fact, \cite{chaos} reported that Chaos only achieves good performance by using large-SSD machines connected with 40 Gigabit Ethernet, but the performance is undesirable when Gigabit Ethernet is used, which is far more common in most small to medium size companies and most research institutes.

\vspace{2mm}

While there also exist some in-memory graph processing systems designed to run in a single big-memory machine, they are not designed to process very large graphs. For example, the largest graph tested with GRACE~\cite{grace} has less than 300 million edges. The high startup overhead is another problem, which we explain by considering the loading of a graph of 100GB size. In a distributed system running with 100 PCs, each PC only needs to load around 1GB data from HDFS (Hadoop Distributed File System). In contrast, a single-machine in-memory system needs to load all the 100GB
data from its local disk, and with its loading time alone a distributed system may have already finished many graph jobs.

In this paper, we introduce our GraphD system, which supports efficient out-of-core vertex-centric computation even with a small cluster of commodity PCs connected by Gigabit Ethernet, which is affordable to most users. We remark that GraphD aims at providing an efficient solution when memory space is insufficient; otherwise, one may simply use an in-memory system. GraphD specifically provides the following desirable features:

\vspace{2mm}

\noindent (1) When a graph $G=(V, E)$ is processed with $n$ machines using GraphD, we prove that each machine only requires $O(|V|/n)$ memory space, which allows GraphD to scale to very large graphs with a small cluster.

\vspace{2mm}

\noindent (2) By maintaining $O(|V|/n)$ vertex states in each machine, GraphD is able to automatically adapt the amount of edges streamed from local disks to the number of vertices that perform computation in an iteration, achieving high performance even in sparse computation workload (which is not possible in existing out-of-core systems).

\vspace{2mm}

\noindent (3) In a common cluster (connected with Gigabit Ethernet), network transmission is much slower than local disk streaming, and GraphD takes this insight into account in its design by a technique called ``outgoing message buffering'' to hide disk I/O cost inside network communication cost, leading to high performance as verified by our experiments.

\vspace{2mm}

\noindent (4) For a broad class of Pregel algorithms where message combiner is applicable, GraphD uses a novel ID-recoding technique to eliminate the need of any expensive external-memory join or group-by. In this case, the only external-memory operation is the streaming of edges and messages on local disks, GraphD is able to achieve almost the same performance as an in-memory Pregel-like system when disk I/O is fully hidden inside network communication.

\vspace{2mm}

The rest of this paper is organized as follows. Section~\ref{sec:related} reviews related work. Section~\ref{sec:dss} presents the distributed semi-streaming computation model of GraphD, and analyzes its space cost. Section~\ref{sec:basic} discusses the parallel framework of GraphD to fully overlap computation with communication. Section~\ref{sec:recode} describes the ID recoding technique and Section~\ref{sec:results} reports experimental results. Finally, Section~\ref{sec:conclude} concludes the paper.

\section{Background and Related Work}\label{sec:related}
We first review the computation model of Pregel, and then review other vertex-centric systems for processing big graphs. In this paper, we assume that the input graph $G=(V, E)$ is stored on HDFS, where each vertex $v\in V$ has a unique ID $id(v)$ and an adjacency list $\Gamma(v)$. For simplicity, we use $v$ and $id(v)$ interchangeably. If $G$ is undirected, $\Gamma(v)$ contains all $v$'s neighbors; while if $G$ is directed, $\Gamma(v)$ contains all $v$'s out-neighbors. The degree (or out-degree) of $v$ (i.e., $|\Gamma(v)|$) is denoted by $d(v)$. Each vertex $v$ also maintains a value $a(v)$ which gets updated during computation. A Pregel program is run on a cluster of machines, $\mathbb{W}$, deployed with HDFS.

\subsection{Pregel Review}\label{ssec:pregel}
\noindent{\bf Computation Model.} A Pregel program starts by loading an input graph from HDFS into memories of all machines, where each vertex $v$ is distributed to a machine $W=hash(v)$ along with $\Gamma(v)$, with $hash(.)$ being a partitioning function on vertex ID. Let $V(W)$ be the set of all vertices assigned to $W$. Each vertex $v$ also maintains a boolean field $active(v)$ indicating whether $v$ is active or halted.

A Pregel program proceeds in iterations, where an iteration is called a superstep. In Pregel, a user needs to specify a user-defined function (UDF) {\em compute}({\em msgs}) to be called by a vertex $v$, where {\em msgs} is the set of incoming messages received by $v$ (sent in the previous superstep). In $v$.{\em compute}(.), $v$ may update $a(v)$, send messages to other vertices, and vote to halt (i.e., deactivate itself). Only active vertices will call {\em compute}(.) in a superstep, but a halted vertex will be reactivated if it receives a message. The program terminates when all vertices are halted and there is no pending message for the next superstep. Finally, the results are dumped to HDFS.

To illustrate how to write {\em compute}(.), we consider the PageRank algorithm of~\cite{pregel} where $a(v)$ stores the PageRank value of vertex $v$, and $a(v)$ gets updated until convergence. In Step~1, each vertex $v$ initializes $a(v)=1/|V|$ and distributes $a(v)$ to its out-neighbors by sending each out-neighbor a message $a(v)/d(v)$. In Step~$i$ ($i > 1$), each vertex $v$ sums up the received message values, denoted by $sum$, and computes $a(v)=0.15/|V|+0.85\cdot sum$. It then distributes $a(v)/d(v)$ to each of its out-neighbors.

\vspace{2mm}

\noindent{\bf Combiner.} Users may also implement a message combiner to specify how to combine messages targeted at the same vertex $v_t$, so that messages generated on a machine $W$ towards $v_t$ will be combined into a single message by $W$ locally, and then sent to $v_t$. Message combiner effectively reduces the number of messages transmitted though the network. In the example of PageRank computation, the combiner can be implemented as computing sum, since only the sum of incoming messages is of interest in {\em compute}(.).

\vspace{2mm}

\noindent{\bf Aggregator.} Pregel also allows users to implement an aggregator for global communication. Each vertex can provide a value to an aggregator in {\em compute}(.) in a superstep. The system aggregates those values and makes the aggregated result available to all vertices in the next superstep.

\vspace{2mm}

For each vertex $v$, machine $W=hash(v)$ keeps the following information in main memory: (1)~the vertex state, which consists of $id(v)$, $a(v)$ and $active(v)$, and (2)~the adjacency list $\Gamma(v)$. Since vertex degree is required by out-of-core systems to demarcate adjacency lists of different vertices, to be consistent, we include $d(v)$ into the vertex state of $v$, which is given as follows:
\begin{equation}\label{eq_vstat}
state(v)=(id(v), a(v), active(v), d(v)).
\end{equation}

\subsection{Vertex-Centric Graph Processing Systems}\label{ssec:others}
As discussed in Section~\ref{sec:intro}, existing vertex-centric systems for big graph processing can be categorized into (1)~distributed in-memory systems, and (2)~single-PC out-of-core systems, and (3)~distributed out-of-core systems. We now review them in more detail.

\vspace{2mm}

\noindent{\bf Distributed In-Memory Systems.} Since Pregel~\cite{pregel} is only for internal use in Google, many open-source Pregel-like systems have been developed including Giraph~\cite{giraph}, GPS~\cite{gps}, GraphX~\cite{graphx}, and Pregel+~\cite{pregelplus}. Like Pregel, these systems keep an entire input graph in memories during computation, and also buffer intermediate messages in memories. While these systems adopt a synchronous execution model where vertex communicates by message passing, GraphLab~\cite{graphlab} adopts a different design. Specifically, a shared-memory abstraction is adopted where a vertex directly pulls data from its adjacent vertices/edges, and asynchronous execution is supported to allow faster convergence for algorithms where vertex values converge asymmetrically. A subsequent version of GraphLab, PowerGraph~\cite{powergraph}, partitions the graph by edges instead of vertices, in order to achieve more balanced workload. Since our work is more related to out-of-core systems, we refer interested readers to~\cite{tamerExp,ourExp} for more discussions on existing in-memory systems.

\vspace{2mm}

\noindent{\bf Single-PC Out-of-Core Systems.} These systems load one vertex partition to memory at a time for processing. In GraphChi~\cite{graphchi}, all vertices in a vertex partition and all their adjacent edges need to be loaded into memory before processing begins. X-Stream adopts a different design, which only needs to load all vertices in a partition into memory, while edges are streamed from local disk. Note that sequential streaming only requires a small in-memory buffer in order to achieve sequential I/O bandwidth, whose memory cost is negligible. In both GraphChi and X-Stream, a vertex communicates with each other by writing/reading data on adjacent edges. VENUS~\cite{venus} avoids the cost of writing data to edges, by letting a vertex obtain values directly from its in-neighbors, but it is not open source. However, all these systems need to scan the whole disk-resident input graph in each iteration, leading to undesirable performance for sparse computation workload. We remark that although GraphChi supports selective scheduling, it is ineffective since a whole partition (including all adjacent edges) needs to be loaded even if just one vertex in the partition needs computation.

\vspace{2mm}

\noindent{\bf Distributed Out-of-Core Systems.} Compared with single-machine systems, these systems only require a machine to process a partition of the graph, and thus the disk bandwidth of all machines are utilized in parallel. HaLoop~\cite{haloop} improves the performance of Hadoop for iterative computation by allowing a job to cache data to local disks to avoid remote reads in each iteration, but for vertex-centric graph computation, users need to explicitly program the interaction between vertices and messages using the MapReduce model. Pregelix~\cite{pregelix} formulates the computation model of Pregel using relational operations like join and group-by, and requires expensive external-memory join and group-by operations. Chaos~\cite{chaos} scales out X-Stream by partitioning the input graph on the disks of multiple machines, each of which streams its own portion of edges but may steal workload from other machines when it becomes idle. However, the system requires high-speed network to synchronize vertex values and to steal workloads, and is inefficient when Gigabit Ethernet is used.

\section{Data Organization and Streams}\label{sec:dss}
In this section, we describe the distributed semi-streaming (DSS) model of GraphD, analyze its memory cost and introduce its disk stream designs.

\subsection{Distributed Semi-Streaming Model}\label{ssec:mem}
We first consider the memory requirement of Pregel. For ease of analysis, we assume that the types of vertex ID, vertex value, adjacency list item, and message all have constant size. Accordingly, a vertex state as given in Eq~(\ref{eq_vstat}) also has constant size (as $active(v)$ and $d(v)$ have constant size). We remark that these data types are specified by users through C++ template arguments, and can have variable sizes in reality (e.g., vertex ID can be a string).

Recall that Pregel keeps the $O(|V|)$ vertex states, $O(|E|)$ edges (i.e., adjacency list items) in memories. Let us denote the set of messages currently in the system by $M$, where a messages is either on the sender-side or on the receiver-side. Then, $O(|M|)$ memory space is also required for keeping messages. Therefore, the total memory space required by Pregel is $O(|V|+|E|+|M|)$.

Note that $O(|E|)$ is typically much larger than $O(|V|)$. For example, a user in a social network can easily have tens of friends.

In many Pregel algorithms such as PageRank computation, only one message is transmitted along each edge in a superstep, and thus $O(|M|)=O(|E|)$. However, $|M|$ can be much larger in some Pregel algorithms. For example, in the triangle finding algorithm of~\cite{socialnet}, to confirm a triangle $\triangle v_1v_2v_3$ where $v_1<v_2<v_3$, $v_1$ needs to send $v_2$ a message asking about whether $v_3\in\Gamma(v_2)$ (note that $v_1$ has access to $v_2$ and $v_3$ in $\Gamma(v_1)$). Since there are $O(|E|^{1.5})$ triangles in a graph~\cite{triangle}, $O(|M|)$ is at least $O(|E|^{1.5})$.

According to the above analysis, the dominating memory cost is contributed by adjacency lists ($O(|E|)$) and messages ($O(|M|)$). GraphD streams adjacency lists and messages on local disks, leaving only the $O(|V|)$ vertex states in memories, and thus significantly reduces the memory requirement.

However, the $O(|V|)$ vertex states can still be too large to fit in the memory of a single machine. In fact, if all vertex states can fit in memory, single-PC systems such as GraphChi often provides an alternative model for more efficient semi-streaming graph processing. Since GraphD is a distributed system, each machine only needs to keep a portion of vertex states. GraphD follows the {\em distributed semi-streaming} (DSS) model\footnote{We name the model as DSS due to its similarity to semi-streaming computation of external-memory graph algorithms.}, where each machine $W$ only keeps the states of all vertices in $V(W)$ in its memory, and treats their adjacency lists and incoming and outgoing messages as local disk streams. It remains to show that DSS distributes the vertex states evenly among the $|\mathbb{W}|$ machines, i.e., each machine holds no more than $O(|V|/|\mathbb{W}|)$ vertex states with a small constant (e.g., 2). This memory requirement is very reasonable given the RAM size of a commodity PC today, allowing a small cluster to scale to very large graphs. We now prove this property below, where we regard the machine number $|\mathbb{W}|$ as a constant.

\begin{lemma}\label{lemma_space}
Assume that $hash(.)$ is well chosen so that a vertex is assigned to every machine with equal probability, then with probability of at least $(1 - O(1/|V|))$, it holds that $\max_{W\in\mathbb{W}}|V(W)|$ is less than $2|V|/|\mathbb{W}|$.
\end{lemma}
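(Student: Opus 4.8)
This is a balls-into-bins concentration bound followed by a union bound. Fix a machine $W\in\mathbb{W}$, and for each vertex $v\in V$ let $X_v$ be the indicator that $hash(v)=W$. By the hypothesis on $hash(.)$, the $X_v$ (over $v\in V$) are independent with $\Pr[X_v=1]=1/|\mathbb{W}|$, so $|V(W)|=\sum_{v\in V}X_v$ has mean $\mu:=|V|/|\mathbb{W}|$. First I would bound the probability that a single machine is overloaded, $\Pr[\,|V(W)|\ge 2\mu\,]$. A multiplicative Chernoff bound with deviation factor $\delta=1$ gives $\Pr[\,|V(W)|\ge 2\mu\,]\le e^{-\mu/3}=e^{-|V|/(3|\mathbb{W}|)}$; alternatively, since $\mathrm{Var}(|V(W)|)=|V|\cdot\tfrac{1}{|\mathbb{W}|}\bigl(1-\tfrac{1}{|\mathbb{W}|}\bigr)\le\mu$, Chebyshev's inequality already yields the weaker but sufficient estimate $\Pr[\,|V(W)|\ge 2\mu\,]\le \mathrm{Var}/\mu^{2}\le |\mathbb{W}|/|V|$.

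\noindent Next I would take a union bound over the $|\mathbb{W}|$ machines:
\[
\Pr\Bigl[\max_{W\in\mathbb{W}}|V(W)|\ge 2|V|/|\mathbb{W}|\Bigr]\ \le\ |\mathbb{W}|\cdot\Pr[\,|V(W)|\ge 2\mu\,]\ \le\ |\mathbb{W}|^{2}/|V|,
\]
using the Chebyshev estimate. Since $|\mathbb{W}|$ is treated as a constant, the right-hand side is $O(1/|V|)$ (and is in fact exponentially small in $|V|$ if one uses the Chernoff estimate instead, which is even stronger than the statement asks for). Passing to the complementary event finishes the proof: with probability at least $1-O(1/|V|)$, every machine holds fewer than $2|V|/|\mathbb{W}|$ vertices.

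\noindent I do not expect a real obstacle; the calculation is routine. The only point that warrants a line of care is the dependence structure: for a fixed vertex $v$ the events $\{hash(v)=W\}$ across machines $W$ are not independent, so one cannot apply a single tail bound to all bins simultaneously. The argument avoids this by first bounding each machine's load in isolation---where the randomness, now over the $|V|$ vertices, \emph{is} independent---and only then combining the per-machine failure probabilities by a union bound. It is also worth remarking that the ``equal probability'' assumption on $hash(.)$ is exactly what makes $\mathbb{E}|V(W)|=|V|/|\mathbb{W}|$, so the constant $2$ is merely the slack $\delta=1$ in the tail bound; any constant strictly larger than $1$ works, with a correspondingly different (still $O(1/|V|)$) bound.
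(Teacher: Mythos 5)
Your proposal is correct and follows essentially the same route as the paper: Chebyshev's inequality applied to the binomial load $|V(W)|$ of a single machine (using $\mathrm{Var}\le\mu$ to get $\Pr[\,|V(W)|\ge 2\mu\,]\le |\mathbb{W}|/|V|$), followed by a union bound over the $|\mathbb{W}|$ machines to obtain the $|\mathbb{W}|^2/|V|=O(1/|V|)$ failure probability. Your remarks that a Chernoff bound would give an exponentially small bound and that the constant $2$ is just the slack $\delta=1$ are correct strengthenings, but the core argument matches the paper's.
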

\begin{proof}
First, consider a particular machine $W$. Since every vertex is hashed to $W$ with probability $p=1/|\mathbb{W}|$, the total number of vertices that are hashed to $W$ (i.e., $|V(W)|$) conforms to a binomial distribution with mean $\mu=|V|p$ and variance $\sigma^2=|V|p(1-p)<|V|p=\mu$.

According to Chebyshev's inequality, we have
\begin{displaymath}
\Pr\Big(\Big||V(W)|-\mu\Big|\geq\mu\Big) \leq \sigma^2/\mu^2.
\end{displaymath}
Since $\sigma^2<\mu$, the R.H.S.\ is less than $1/\mu$. Moreover, since $|V(W)|$ is positive, the L.H.S.\ is equivalent to $\Pr(|V(W)|\geq2\mu)$. Therefore, we obtain
\begin{equation}\label{eq_1w}
\Pr(|V(W)|\geq 2\mu) < 1/\mu.
\end{equation}
Since $\mu=|V|/|\mathbb{W}|$, $1/\mu=|\mathbb{W}|/|V|=O(1/|V|)$ is a very small number. For example, when we process a billion-node graph using a cluster of 20 PCs, $|\mathbb{W}|$ is only 20 but $|V|$ is on the order of $10^9$, and thus $1/\mu$ is in the order of $10^{-7}$--$10^{-8}$.

We now consider all machines in $\mathbb{W}$, and proceed to prove our lemma:
\begin{eqnarray*}
& & \Pr(\max_{W\in\mathbb{W}}|V(W)| < 2|V|/|\mathbb{W}|)\\
& = & \Pr(\max_{W\in\mathbb{W}}|V(W)| < 2\mu)\\
& = & \Pr\big(\bigwedge_{W\in\mathbb{W}}\Big\{|V(W)| < 2\mu\Big\}\big)\\
& \geq & 1-\sum_{W\in\mathbb{W}}\Pr(|V(W)| \geq 2\mu)\ \ \ \ \ \mbox{(using union bound)}\\
& > & 1 - |\mathbb{W}|/\mu\ \ \ \ \ \ \ \ \ \ \ \ \ \ \ \ \ \ \ \ \ \ \ \ \ \ \ \ \ \ \,\mbox{(using Eq~(\ref{eq_1w}))}.
\end{eqnarray*}
The lemma is proved by noticing that $|\mathbb{W}|/\mu = |\mathbb{W}|^2/|V| = O(1/|V|)$. For example, when $|\mathbb{W}|$ is 20 and $|V|$ is in the order of $10^9$, $|\mathbb{W}|^2/|V|$ is in the order of $10^{-6}$--$10^{-7}$.
\end{proof}

We additionally require that main memory of a machine be large enough to hold the state $state(v)$ and adjacency list $\Gamma(v)$ of any single vertex $v$, so that $v$ can access them in $v.${\em compute}(.). We add this constraint because $\Gamma(v)$ of a high-degree vertex $v$ could require more memory space than $O(|V|/|\mathbb{W}|)$ (i.e., the bound of Lemma~\ref{lemma_space}), but this constraint is reasonable given the RAM size of a commodity PC today, and it is also required by existing out-of-core systems such as GraphChi and Pregelix.

\subsection{Graph Organization and Edge Streaming}\label{ssec:edgestream}
While GraphD may load data from HDFS and write results to HDFS, during the iterative computation, GraphD only sequentially reads and/or writes binary streams on local disks for efficiency.

When users specify GraphD to load an input graph from HDFS, the graph gets partitioned among all machines, where each machine $W$ saves the adjacency lists of vertices in $V(W)$ to local disk as an edge stream denoted by $S^E$. Meanwhile, the states of vertices in $V(W)$ are kept in memory (for computation) and also written to local disk (for subsequent local loading, see below). Optionally, if the graph is previously loaded from HDFS by another job, users may also specify GraphD to load graph from local disks, in which case each machine directly loads the previously saved vertex states to memory.

\begin{figure}[t]
    \centering
    \includegraphics[width=0.9\columnwidth]{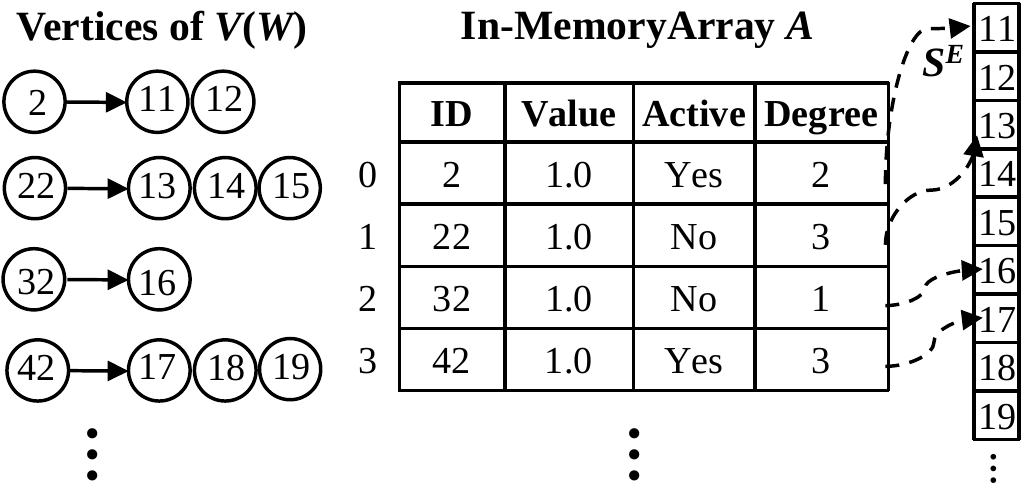}
    \caption{Vertex States and Edge Stream of a Machine $W$}\label{graphDS}
\end{figure}

In GraphD, each machine organizes its in-memory vertex states with an array $A$, as illustrated in Figure~\ref{graphDS}. Vertices in $A$ are ordered by vertex ID (i.e., 2, 22, 32, 42, $\cdots$ in Figure~\ref{graphDS}), and the edge stream $S^{E}$ simply concatenates their adjacency lists in the same order. In a superstep, {\em compute}(.) is scheduled to be called on the active vertices in $A$ in order. Since a vertex $v$ needs to access $\Gamma(v)$ in $v.${\em compute}(.), the next $d(v)$ items are sequentially read from $S^{E}$ to form $\Gamma(v)$. Thus, each superstep only sequentially reads $S^{E}$ once. If topology mutation is enabled, each superstep (say, Step~$i$) should digest an old edge stream $S^{E}_{\langle i-1\rangle}$ and generate a new edge stream $S^{E}_{\langle i\rangle}$ (for use in Step~$(i+1)$), where the subscripts denote the corresponding superstep number.

However, this method streams the whole edge stream once in each superstep, even if only a few vertices are active. Note that sparse computation workload is not a problem for in-memory systems since the adjacency lists are stored in RAMs, but it often causes performance bottleneck for disk-based systems like X-Stream. For example, \cite{xstream} admitted that X-Stream is inefficient for {\em graphs whose structure requires a large number of iterations}, as each iteration has to stream all edges of a graph. For the example of Figure~\ref{graphDS}, since Vertices~22 and~32 are not active, if they also receive no message, then their edges can be skipped.

For this purpose, our streaming algorithm should support a function {\em skip}({\em num\_items}), to skip the next {\em num\_items} items from the stream. Referring to Figure~\ref{graphDS} again, after Vertex~2 is processed, we may skip the edges of Vertices~22 and~32 by calling {\em skip}(4), where 4 is computed by adding their degrees $d(v)$ (i.e., 3 and 1 in array $A$). However, it is inefficient to perform a random disk read each time {\em skip}(.) is called. This is because, if there are many small series of inactive vertices in $A$, too many random disk I/Os are incurred, which may be even more costly than streaming the whole $S^E$.

We want our streaming algorithm to automatically adapt to the fraction of active vertices, i.e., (1)~it should achieve sequential disk bandwidth when the workload is dense, and (2)~should be able to skip a large amount of inactive vertices with a few random reads when the workload is sparse. We also need to guarantee that (3)~the worst case cost is no larger than streaming the whole $S^E$ once.

Before describing our streaming algorithm, we first consider how a stream (i.e., a file) is normally read. Specifically, an in-memory buffer $B$ of size $b$ is maintained throughout the streaming of a file. To read data from the stream, we continue reading from the latest read-position in $B$, and if we reach the end of $B$, we refill $B$ with the next $b$ bytes of data from the stream file on disk. Since each batch of $b$ bytes of data is read into $B$ using one random disk read, the costs of the random read (e.g., seek time and rotational latency) is amortized by all $b$ bytes, and as long as $b$ is not too small, the disk reads become sequential. GraphD sets $b$ to 64~KB as default, which is more than enough for achieving sequential bandwidth in most platforms, but is negligible given the RAM size of a modern PC.

To achieve the aforementioned 3 requirements, {\em skip}(.) avoids reading data from the file if after the skipping, the position to read data from is still in the buffer $B$. Obviously, this approach limits the number of random reads to be no more than that incurred when streaming the whole $S^E$. More specifically, {\em skip}($k$) is implemented as follows: we move the latest read-position in buffer $B$ forward for $k$ adjacency list items, to the position $pos$. If $pos$ is still inside $B$, we are done and no random read is incurred. Otherwise, $pos$ has exceeded the end of $B$, and we move the read-position in the stream file forward for $(pos-b)$ bytes (i.e., the amount to skip right after the end of $B$), to locate the start position for reading the next $b$ bytes from the file; we then refill $B$ with the next $b$ bytes of data read from the file. When the workload is sparse, this method is able to avoid sequentially reading a lot of useless items (by one random disk read).

Note that an important reason of maintaining vertex states in main memory is to quickly access vertex degrees for computing the number of bytes to skip. Otherwise, the vertex states can be treated as a disk stream that gets streamed along with $S^{E}$ during computation, in which case we only require a minimal amount of memory.

\subsection{Message Streams}

\begin{figure}[t]
    \centering
    \includegraphics[width=\columnwidth]{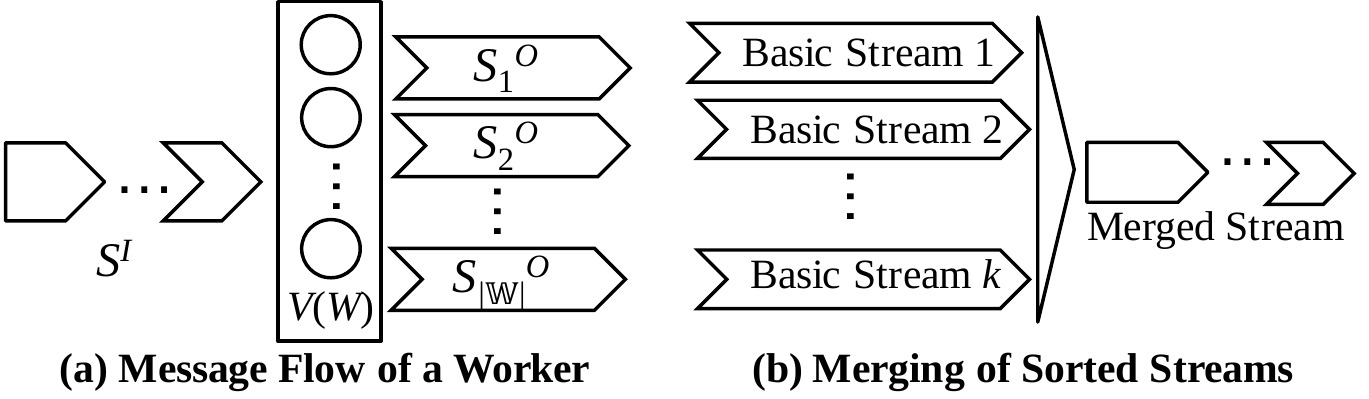}
    \caption{Message Streams in GraphD}\label{mstream}
\end{figure}

We now consider the message streams in our DSS model. As Figure~\ref{mstream}(a) shows, each machine can have an {\em incoming message stream} (\textbf{IMS}) $S^I$, and $|\mathbb{W}|$ {\em outgoing message streams} (\textbf{OMSs}) $S^O_i$ $(i=1, 2, \cdots, |\mathbb{W}|)$ on disk. Here, $S^O_i$ is used to buffer those messages towards vertices on the $i$-th machine, denoted by $W_i$. We now describe these streams.

\subsubsection{Outgoing Message Streams}\label{ssec:oms}
When a vertex $v$ sends a message to another vertex $u$ in $v$.{\em compute}(.), we may either (1)~buffer it in memory for sending, or we may (2)~append it to an OMS $S^O_{hash(u)}$ on local disk, to be loaded later for sending. The actual decision depends on whether disk streaming bandwidth or network bandwidth is larger. Option~(2) appears a bit strange at first glance, since it needs to write each message to disk and then load it back, leading to additional disk I/O. However, we need it since messages are generated by vertex-centric computation quickly, and if the speed of message sending cannot catch up but we keep buffering new messages, we might end up buffering too many messages in memory (or even causing memory overflow), breaching the bound of $O(|V|/|\mathbb{W}|)$ established in Lemma~\ref{lemma_space}. We explain when and why we adopt option~(1) or option~(2) below.

\vspace{2mm}

\noindent{\bf Design Philosophy.} If {\em network bandwidth is higher than disk streaming bandwidth} (e.g., when the 40 Gigabit Ethernet used in~\cite{chaos} is adopted), GraphD does not create OMSs. However, since vertex-centric computation generates messages quickly, the memory budget for buffering messages will soon be reached. To avoid memory overflow, vertex-centric computation is stalled while the buffered messages are being sent. Then, these messages are removed from the in-memory message buffer, allowing vertex-centric computation to continue (to generate and buffer more messages).

The stalling degrades performance, since it leads to repeated serial execution of message sending followed by vertex-centric computation (i.e., buffer refilling), and thus computation and communication do not overlap with each other. Note that the cost of computation in GraphD is not negligible since a machine needs to stream $S^E$. However, it makes no sense to write messages to OMSs, since writing a message to disk is even slower than sending it.

In contrast, if {\em disk streaming bandwidth is higher than network bandwidth} (e.g., when the commonly used Gigabit Ethernet is adopt-ed), GraphD %note the split "adopt-ed"
uses OMSs to buffer the generated messages that are to be sent out. Since messages are buffered to local disks, vertex-centric computation is never stalled, allowing computation to be perform in parallel with message sending. The resulting parallelism of OMS appending and message sending, in turn, hides the cost of the former inside that of the latter, as disk bandwidth is higher than network bandwidth.

In fact, in a cluster assembled with PCs and 1 Gbps switches that are commonly available to average users, local disk streaming bandwidth is much larger than network bandwidth. This observation has been reported by existing work such as~\cite{ftgiraph}, which proposes a faster fault-recovery method for the framework of Pregel, but requires every machine $W$ to log all messages sent by $W$ also to the local disk. Experiments of~\cite{ftgiraph} (using 1 Gbps switch) reported that execution with message logging is almost as fast as execution without logging, which shows that the cost of sequentially writing messages to disks is negligible compared with message transmission. This is also confirmed by our experimental results reported in Table~\ref{pagerank_comp} of Section~\ref{sec:results}, which shows that the total time of vertex-centric computation (which performs message streaming) accounts for a very small fraction of the running time of a superstep, while message transmission lasts throughout the whole superstep.

We studied the reasons behind this observation, and found that (i)~disk streaming is significantly accelerated by OS memory cache, and that (ii)~the network resource is contended by all the $|\mathbb{W}|$ machines, limiting the connection throughput between any pair of machines. In this common setting, our use of OMSs is able to hide the disk I/O cost inside the communication cost, leading to full overlapping between computation and communication.

\vspace{2mm}

\noindent{\bf OMS Structure.} Recall that vertex-centric computation appends messages to an OMS $S^O_i$, and meanwhile, earlier messages in $S^O_i$ are loaded to memory for sending, and should then be garbage collected from $S^O_i$. One may organize an OMS as an append-only streaming file, where new data are always written to its in-memory buffer $B$ (of size $b=$ 64 KB), and when $B$ becomes full, the data in $B$ gets flushed to the stream file and $B$ is emptied for appending more messages.

However, this solution has several weaknesses. Firstly, since vertex-centric computation continually appends data to the OMS file, it is difficult to track whether sufficient new messages have been written to the file so that sending them will not underutilize the network bandwidth. Secondly, a message that gets sent cannot be garbage collected from its OMS. In short, it is not desirable to obtain messages from a file that is appending new data.

To solve this problem, we implement an OMS as a {\em splittable stream} that supports concurrent data appending (at the tail) and data fetching (at the head). Specifically, a splittable stream $S$ breaks a long stream of data items into multiple files $F_1, F_2, \ldots$, where each file $F_j$ either has at most $\mathcal{B}$ bytes, or contains only one data item whose size is larger than $\mathcal{B}$. Here, $\mathcal{B}$ is a parameter of a splittable stream and we shall discuss how to set it shortly.

A splittable stream $S$ appends data items to each of its file in a streaming manner, which only requires an in-memory buffer $B$ to achieve sequential disk I/O. Let us assume that $S$ is currently writing $F_j$. To append a data item $o$ to $S$, $S$ checks whether $F_j$ will have more than $\mathcal{B}$ bytes after appending $o$: (1)~if so, $F_j$ is closed and a new file $F_{j+1}$ is created for appending $o$; (2)~otherwise, $o$ is directly appended to $F_j$. Since $S$ writes to only one file at a time, $S$ requires only $b=64$~KB of memory.

In GraphD, since each OMS is organized as a splittable stream, the $|\mathbb{W}|$ OMSs in a machines take $|\mathbb{W}|\cdot b$ bytes of memory in total. Even when $|\mathbb{W}|=1000$, all OMSs take merely 64 MB of RAM.

\vspace{2mm}

\begin{figure}[t]
    \centering
    \includegraphics[width=\columnwidth]{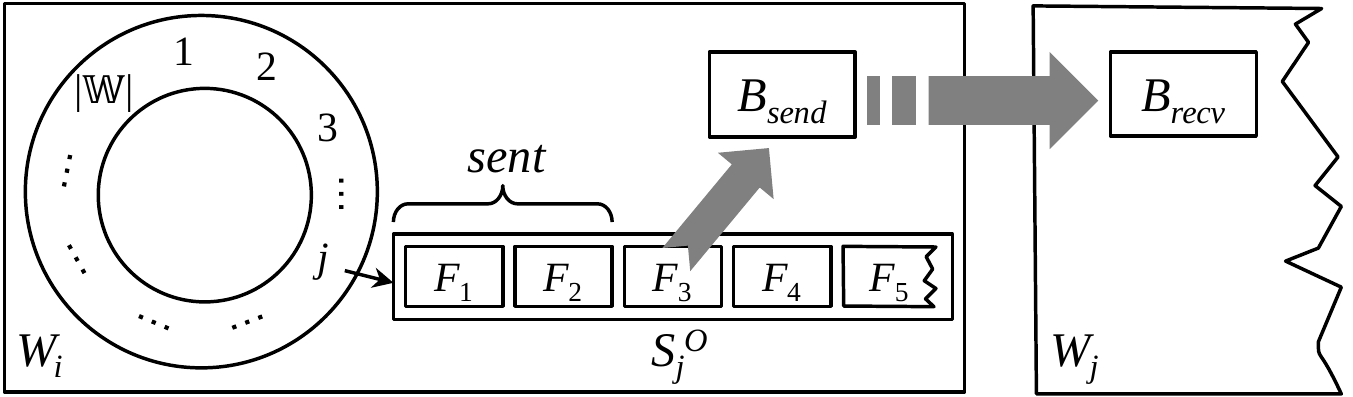}
    \caption{Sending Messages in OMSs}\label{fselect}
\end{figure}

\noindent{\bf Sending Messages in OMSs.} When an OMS $S^O_i$ is writing $F_j$, messages in $F_1, \ldots, F_{j-1}$ can be sent to machine $W_i$ to utilize the network bandwidth. We now describe how GraphD sends messages in OMSs. As Figure~\ref{fselect} shows, each machine $W_i$ maintains an in-memory sending buffer $B_{send}$, and a fully-written file $F_k$ of an OMS $S^O_j$ is sent to $W_j$ by first loading the messages in $F_k$ to $B_{send}$, which are then sent to $W_j$ in one batch. Obviously, the buffer size $|B_{send}|$ should be as large as the largest possible size of $F_j$, which is at least $\mathcal{B}$.

We will discuss how to set $|B_{send}|$ properly shortly. Now, we first discuss how we set $\mathcal{B}$. Obviously, the smaller $\mathcal{B}$ is, the finer-grained each message file is, and thus the less likely that message sending will be stalled on a file that is being appended. However, since messages are sent in batches of size around $\mathcal{B}$, $\mathcal{B}$ cannot be too small as sending messages in small batches is inefficient. GraphD sets $\mathcal{B}$ as 8 MB by default, which is large enough to fully utilize the network bandwidth (and keep the number of files tractable), while small enough to avoid file collision for both message appending and message sending. We remark that maintaining a sending buffer of 8 MB (or larger, as we shall explain) is well acceptable given the RAM size of a modern PC.

\vspace{2mm}

\noindent{\bf Sending Strategies.} Referring to Figure~\ref{fselect} again, each machine $W_i$ orders the $|\mathbb{W}|$ OMSs into a ring, where each OMS keeps track of the batch number of the last file that has been sent (resp.\ fully written), denoted by $no_s$ (resp.\ $no_w$). For example, for $S^O_j$ in Figure~\ref{fselect} which is currently appending messages to $F_5$, $no_s=2$ and $no_w=4$. Moreover, each machine keeps track of the position in the ring, denoted by $p$, from whose OMS (i.e., $S^O_p$) the previous message file is selected to be loaded to $B_{send}$ for sending.

If message combiner is not used, we scan through the ring from position $p$, until an OMS $S^O_j$ is reached whose $no_s<no_w$ (i.e., there is at least one file to send). There are two possible cases.

{\em Case 1:} if such an OMS is found before the scan reaches $p$ again, we load $F_{no_s+1}$ to $B_{send}$ for sending, and then update $p$ as $j$. For example, for $S^O_j$ in Figure~\ref{fselect}, we only send $F_3$. Then, the same scan operation is repeated starting from the updated position $p$ in the ring. Note that even if $S^O_j$ has more than one file to send to $W_j$ (e.g., $F_4$ in Figure~\ref{fselect}), the next scan will pick a file from another OMS $S^O_{j'}\ (j'\neq j)$ for sending to $W_{j'}$ (if exists), to avoid communication bottleneck on the receiver-side. For the same reason, different machines will initialize $p$ to be different values when a job begins.

{\em Case 2:} if the scan reaches $p$ again without finding a valid OMS, then no OMS has a file to send, and thus the scanning thread goes to sleep. The thread is awakened to repeat the scan whenever a new message file is written.

On the other hand, if message combiner is used, we adopt a different scanning strategy to maximize the effect of message combining: if the scan locates a valid OMS, all its message files from $F_{no_s+1}$ to $F_{no_w}$ are combined for sending in one batch. Specifically, the messages are first merge-sorted (i.e., grouped) by destination vertex ID, and then another pass over the sorted messages combines each group into one message and appends this message to $B_{send}$ for sending. The strategy is effective, since (1)~when all active vertices have called {\em compute}(.) in the current superstep, OMSs are finalized and our strategy essentially combines all remaining messages in each OMS, while (2)~otherwise, message combining runs in parallel with vertex-centric computation, and thus does not increase the computation time.

Here, combined messages are appended to $B_{send}$, and since the messages come from multiple files, $|B_{send}|$ may need to increase beyond $\mathcal{B}$. However, since there could be at most one combined message for each vertex in the target machine, $|B_{send}|$ is upper bounded by $O(\max_{W\in\mathbb{W}}|V(W)|)$. While GraphD sets $|B_{send}|$ as $\mathcal{B}$ by default, if combiner is used, GraphD increases $|B_{send}|$ to $O(\max_{W\in\mathbb{W}}|V(W)|)$ (if it is larger than $\mathcal{B}$). According to Lemma~\ref{lemma_space} of Section~\ref{ssec:mem}, the memory bound of $O(|V|/|\mathbb{W}|)$ is still kept.

Finally, we show that merge-sorting message files takes only constant memory space which is well-affordable to a modern PC. Assume that we sort files $F_1, F_2, \ldots, F_n$ by $k$-way merge-sort, then it takes $\lceil\log_{k}n\rceil$ sequential passes over all the messages. At any time during the merge-sort, only one merge operation is running where (at most) $k$ sorted message files are being merged into one larger message file as Figure~\ref{mstream}(b) illustrates. Since we treat each sorted message file as a stream when reading/appending messages, the merge-sort uses $(k+1)$ in-memory buffers, which takes $(k+1)b$ memory space.

GraphD sets $k$ to 1000, and thus a merge-sort operation takes merely $(64$ MB $+$ $64$ KB$)$ RAM despite the large $k$. Moreover, the large value of $k$ allows merge-sort to take only one pass even for very large graphs, since the number of message files to combine is usually smaller than $k=1000$. To see this, recall that each message file has size around $\mathcal{B}=8$ MB, and thus $k$ files have size around 8 GB, which is quite large for an OMS (which only contains messages transmitted between one pair of machines).

Also note that this strategy is just a baseline, and in Section~\ref{sec:recode} we shall see that when our ID recoding technique is used, messages can be combined in memory without performing merge-sort first.

\subsubsection{Incoming Message Stream}
We now consider the IMS $S^I$. Since outgoing messages are loaded to $B_{send}$ and sent in batches, each machine also needs to maintain an in-memory receiving buffer $B_{recv}$ with $|B_{recv}|=|B_{send}|$. In each machine, a receiving thread listens on the network, and uses $B_{recv}$ to receive one message batch at a time and adds the messages to $S^I$. Next, we discuss how to add received messages to $S^I$.

In a superstep, each active vertex $v$ calls {\em compute}({\em msgs}), where {\em msgs} is obtained from $S^I$. Since the vertex-state array $A$ and edge stream $S^E$ are already ordered by vertex ID, we require messages in $S^I$ also to be ordered by destination vertex ID, so that vertex-centric computation may simply proceed in one pass over $A$ by sequentially reading from both $S^I$ and $S^E$. Specifically, to call $v$.{\em compute}({\em msg}), $v$ may read the next $d(v)$ items from $S^E$, and sequentially read messages targeted at $v$ from $S^I$ and append them to {\em msgs} until a message targeted at $u>v$ (or the end of $S^I$) is reached.

However, the order that messages in $S^I$ are received depends on the actual communication process. We adopt the following approach to make $S^I$ ordered. Specifically, whenever a machine receives a batch of messages in $B_{recv}$, it sorts the messages by destination vertex ID, and then writes the sorted messages to a file on disk. Finally, when all incoming messages for the current superstep are received, the sorted message files are further merged (or merged-sorted) into one sorted message file, which becomes $S^I$.

As we have discussed, GraphD uses 1000-way merge-sort which takes merely $(64$ MB $+$ $64$ KB$)$ RAM. Moreover, since each received message batch has size around 8 MB, when there are no more than 8 GB messages, the message files are simply merged. Moreover, merge-sort is unlikely to take more than 2 passes since this requires a machine to receive over 8 TB messages.

Again, this solution is just a baseline, as the use of the ID recoding technique allows incoming messages to be digested in memory, and thus there is no $S^I$ (and no merge-sort).

\subsubsection{Cost Analysis}\label{ssec:cost}
We now analyze the total memory cost of GraphD when both IMS and OMSs are maintained, assuming that $|\mathbb{W}|<1000$.

For communication, each machine maintains two buffers $B_{send}$ and $B_{recv}$ which take $2\mathcal{B}=16$ MB memory space. For computation, the OMSs need $|\mathbb{W}|\cdot b<64$ MB memory for appending messages, and $2b=128$ KB memory for reading the edge stream and the IMS. When combiner is used, the merge-sort for combining messages before sending takes $(64$ MB $+$ $64$ KB$)$ memory. After all messages are received, the merge-sort for constructing $S^I$ needs $(64$ MB $+$ $64$ KB$)$ memory. Therefore, each machine requires only around 200 MB additional memory space besides the vertex-state array $A$. Therefore, the space bound of $O(|V|/|\mathbb{W}|)$ established by Lemma~\ref{lemma_space} is still kept.

As for the disk I/O cost of a superstep, all the streams $S^E$, $S^I$ and $S^O_i$ are sequentially read and/or written for only one pass, while the merge-sort for combining messages (resp.\ for constructing $S^I$) additionally takes one pass (or at most two passes for an enormous graph) over the outgoing (resp.\ incoming) messages.

\subsection{Other Issues}\label{ssec:issues}
\noindent{\bf Data Loading.} Data loading from HDFS is similarly processed, except that now data items in an OMS and an IMS becomes vertices (along with their adjacency lists) rather than messages.

Specifically, each machine parses a portion of the input file from HDFS, and for each vertex $v$ parsed, it is appended to the OMS $S^O_i\ (i=hash(v))$ to be directed to $W_i$. Since there could be high-degree vertices, we require $|B_{send}|$ and $|B_{recv}|$ to be at least large enough to hold one vertex and its adjacency list (which could be larger than $O(|V|/|\mathbb{W}|)$) at the loading stage. The received vertices are merge-sorted by vertex ID into $S^I$, which then gets splitted into $A$ and $S^E$ using another pass over $S^I$.

\vspace{2mm}

\noindent{\bf Topology Mutation.} GraphD also supports algorithms that perform topology mutation, by associating a type with each message indicating whether the message is an ordinary one or is for vertex mutation. Edge mutations are performed in $v$.{\em compute}(.) by directly updating $\Gamma(v)$, which is written to a new local edge stream for the next superstep. Vertex mutations are performed after vertex-centric computation, where new vertices are appended to the vertex-state array $A$, and deleted vertices are simply marked in $A$ without actual deletion. This design guarantees that existing vertices never change their positions in $A$, which is required for our ID recoding technique to be described in Section~\ref{sec:recode}.

\vspace{2mm}

\noindent{\bf Fault Tolerance.} Our design of streams naturally supports checkpointing. The vertex states and edge streams are backed up to HDFS at the beginning of a job. To checkpoint a superstep, the IMSs of all machines are backed up to HDFS; and if topology mutation happens, the locally-logged incremental updates since last checkpoint are also backed up to HDFS. When failure happens, a machine loads its vertex states and edge stream from HDFS, replays the mutation operations, and loads incoming messages from the latest checkpoint to resume execution.

Our DSS model straightforwardly supports the message-log based fast recovery approach of~\cite{ftgiraph} mentioned in Section~\ref{ssec:oms}, since every machine writes outgoing messages to OMSs on local disk. The only change required is the timing of garbage collecting OMSs: each machine keeps all its OMSs on local disk (for use during recovery) until a new checkpoint is written to HDFS, instead of deleting a message file immediately after its messages are sent.

\section{Parallel Framework of DSS}\label{sec:basic}
In Section~\ref{sec:dss}, we have discussed the graph and stream data organization of our DSS model. In this section, we introduce how these structures are actually used in parallel graph computation. We focus on both the parallelism between machines, and that within a machine. Due to the space limitation, we only discuss the most complicated case when all streams $S^E$, $S^I$ and $S^O_i$ are used by GraphD, i.e., when local disk streaming bandwidth is larger than network bandwidth as is common in a commodity cluster. In this setting, the intra-machine parallelism mainly refers to the overlapping of computation (local disk streaming) with communication (message transmission). We now present our parallel framework.

We use {\em FIFO} communication channels, i.e., if a machine $W_i$ sends message $m_a$ and then $m_b$ to another machine $W_j$, $W_j$ is guaranteed to receive $m_a$ before receiving $m_b$. We also use {\em condition variables} to avoid a blocking thread from occupying CPU resources: suppose that a thread $w_a$ needs to block until a condition holds, it may wait on a condition variable {\em cond-var} and will no longer occupy CPU, when another thread $w_b$ updates the condition, it may wake up $w_a$ to continue execution using {\em cond-var}.

Each machine runs three units in parallel: (1)~a sending unit $U_s$ that sends outgoing messages; (2)~a receiving unit $U_r$ that receives incoming
messages; and (3)~a computing unit $U_c$ that performs vertex-centric computation (to generate messages). We now explain how they interact with each other.

\vspace{2mm}

\noindent{\bf Synchronization Between Supersteps.} Since Pregel adopts synchronous execution, it is unreasonable to delay the transmission of messages generated in Step~$i$, by transmitting messages generated in Step~$(i+1)$, especially in a commodity cluster where network bandwidth is limited.

Therefore, $U_s$ of all machines should block the sending of messages generated by their $U_c$ in Step~$(i+1)$, until all messages generated in Step~$i$ have been received by $U_r$ in all machines.

Our framework guarantees this property, by letting $U_r$ in each machine to synchronize with the receiving units of all other machines, after it has received all the messages towards its machine (generated in Step~$i$). We will discuss how $U_r$ determines this condition shortly. After the synchronization, $U_r$ guarantees that all messages generated in Step~$i$ have been transmitted, and thus it notifies $U_s$ to continue sending messages generated in Step~$(i+1)$.

\vspace{2mm}

\noindent{\bf Message Receiving.} We now explain how $U_r$ decides whether it has received all messages of Step~$i$. Specifically, whenever $U_s$ in a machine $W_j$ has sent all its messages towards another machine $W_k$ (i.e., $W_j$'s OMS $S^O_k$ is exhausted), it will send an end tag (a special message) to $W_k$. As a result, a machine $W_k$ just needs to count the number of end tags received, and if it reaches $|\mathbb{W}|$, messages from all machines must have been received. This is because GraphD guarantees the property that all messages (including end tags) generated in Step~$i$ must be transmitted before any message (including an end tag) generated in Step~$(i+1)$, as we have described before.

Here, $U_s$ decides that it has exhausted its OMS $S^O_k$ (and sends an end tag to $W_k$) if the following two conditions are met: (1)~$U_c$ has finished vertex-centric computation for Step~$i$, and will thus generate no more messages of Step~$i$; and (2)~there is no more message file in OMS $S^O_k$ for sending.

\vspace{2mm}

\noindent{\bf Vertex-Centric Computation.} When $U_c$ finishes its computation of Step~$i$, it has to block until $U_r$ has received all messages towards it in Step~$i$, before starting to compute Step~$(i+1)$. This is because, to call $v$.{\em compute}({\em msg}) in Step~$(i+1)$, we need to guarantee that {\em msg} contains all the messages sent to $v$ from Step~$i$.

However, unlike $U_s$, $U_c$ does not need to wait till all receiving units are synchronized, and may start generating messages of Step~$(i+1)$ earlier, although these messages will only be sent by $U_s$ after the synchronization.

To summarize, in Step~$i$, $U_r$ first keeps receiving messages until $|\mathbb{W}|$ end tags are received, then notifies $U_c$ that it is allowed to compute Step~$(i+1)$, then synchronizes with the receiving units of the other machines; and if the job should continue, $U_r$ then notifies $U_c$ that it is allowed to send messages for Step~$(i+1)$.

The benefit of letting $U_c$ start computing Step~$(i+1)$ earlier is that, when $U_s$ starts to send messages of Step~$(i+1)$, it can readily find fully-written OMS files for sending, and thus can fully utilize the network bandwidth.

\vspace{2mm}

\noindent{\bf Synchronization of Global Information.} When $U_c$ of a machine $W$ performs vertex-centric computation in Step~$i$, it will aggregate data to its local aggregator, and update local control information such as whether $W$ has sent any message and whether any vertex is active after calling {\em compute}(.). These data needs to be synchronize to decide whether to continue computing Step~$(i+1)$, and to obtain the global aggregator value for use by {\em compute}(.) in Step~$(i+1)$.

Although we can do that during the previously-described synchronization among the receiving units, this may delay the computation of Step~$(i+1)$ since $U_c$ needs to wait for the global aggregator value, which in turn needs to wait for the transmission of all messages generated in Step~$i$ (recall that we assume communication bandwidth is limited).

Instead, we let the computing units of all machines synchronize these global data as soon as they finish their vertex-centric computation, and there is no need to wait for the slower message transmission to complete. This allows $U_c$ to start computing a new superstep much earlier than the synchronization among receiving units. If $U_c$ decides that the job should terminate after synchronizing with other computing units, it signals $U_s$ and $U_r$ to terminate after they finish processing their current superstep, and then terminates itself.

\section{The ID-Recoding Technique}\label{sec:recode}
For Pregel algorithms where message combiner is applicable, GraphD supports a more efficient computation model which uses a novel ID-recoding technique to (1)~directly digest incoming messages in memory, which eliminates $S^I$, and to (2)~combine outgoing messages in memory, which eliminates the need of external-memory merge-sort on OMS files. Recall the disk I/O cost from Section~\ref{ssec:cost}, then the two aforementioned improvements essentially means that in a superstep, the only disk I/O cost is to streaming $S^E$, and to sequentially appending messages to OMSs. In other words, each superstep only requires one sequential pass over the edge stream, and one sequential pass over the generated messages, which is almost the {\bf minimum} possible I/O cost that any out-of-core Pregel-like system can achieve (if edges and messages are streamed on disks). In contrast, the state-of-the-art out-of-core system, Pregelix, still performs expensive external-memory sort and group-by operations even for algorithms where combiner applies.

We remark that this kind of algorithms cover a broad number of Pregel algorithms. In fact, many systems adopt a narrower edge-centric Gather-Apply-Scatter (GAS) computation model, such as PowerGraph~\cite{powergraph}, GraphChi~\cite{graphchi} and X-Stream~\cite{xstream}, and this model is essentially Pregel algorithms with message combining. Specifically, in the GAS model, each vertex $v$ gathers a message from every in-edge, and combines them to update the value of $v$.

\vspace{2mm}

\noindent{\bf Vertex ID Recoding.} Vertex ID recoding is required by many graph systems to enable efficient computation, although their purposes are different. For example, single-machine systems like Graph-Chi~\cite{graphchi}, X-Stream~\cite{xstream} and VENUS~\cite{graphchi} all require the %note the split Graph-Chi
vertex IDs to be numbered as $1, 2, \cdots$, since their computation model partitions vertices based on vertex ID intervals. While distributed Pregel-like systems should allow users to specify the type of vertex ID (e.g., using a C++ template argument), one such system, GPS~\cite{gps}, still requires the vertex IDs to be numbered as $1, 2, \cdots$. Since the vertex IDs are dense, when a message targets at vertex $v$ is received, the machine can directly locate the incoming message queue of $v$ to append the message, without looking up its location from a lookup table (which incurs much overhead since the lookup is needed for every message). As a result, GPS can achieve better performance than other systems like Giraph and GraphLab, as reported in~\cite{ourExp}. Giraph++~\cite{giraph++} partitions a graph to allow more efficient computation, but in order to allow the vertex-to-machine mapping to still be captured by a simple function $hash(.)$ during sebsequent computation, it also recodes the vertex IDs.

\begin{figure}[t]
    \centering
    \includegraphics[width=\columnwidth]{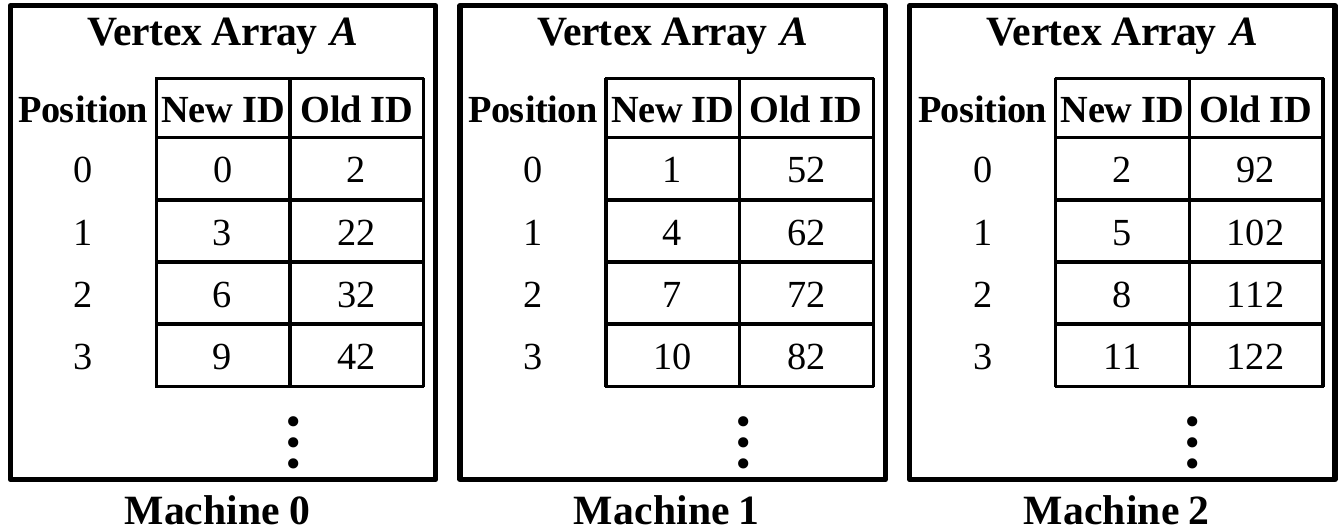}
    \caption{Example of ID Recoding}\label{recode}
\end{figure}

We design a new ID recoding for GraphD to allow in-memory message combining and digesting, while retaining the memory bound of $O(|V|/|\mathbb{W}|)$ established by Lemma~\ref{lemma_space}. The key idea is to establish a one-to-one mapping between the ID of a vertex and its position in the state array $A$, which is efficient to compute. We now explain how GraphD establishes this mapping, assuming that vertex IDs are numbered as $0, 1, \cdots, |V| - 1$. In GraphD, machines are numbered as $0, 1, \cdots, |\mathbb{W}| - 1$. When GraphD is running in {\em recoded mode}, it uses the vertex partitioning function $hash(v)=id(v)$ modulo $|\mathbb{W}|$.

As an illustration, Figure~\ref{recode} shows the vertex state arrays $A$ of a cluster of 3 machines processing a graph with 12 vertices, where we only show the old IDs and the new (i.e., recoded) IDs of the vertices in $A$. We can see that the old IDs are sparse, and are now recoded into dense IDs numbered as $0, 1, \cdots, 11$. In the recoded mode, the new IDs are treated as the actual vertex ID, and obviously we have $hash(v)=id(v)$ modulo 3 (e.g., Vertex~5 is assigned to Machine~2 since 5 modulo 3 is equal to 2).

For a vertex at position $pos$ of array $A$ in Machine~$i$, we can compute its new ID as $(|\mathbb{W}|\cdot pos+i)$. For example, in Figure~\ref{recode}, the vertex whose old ID is 102 is at position~1 of array $A$ in Machine~2, and thus its new ID is computed as $(3\cdot1+2)=5$. Moreover, given the new ID of a vertex, $id$, on Machine~$i$, we can compute its position in $A$ as $\lfloor id/|\mathbb{W}|\rfloor$. For example, in Figure~\ref{recode}, the vertex whose new ID is 5 (in Machine~2) is at position $\lfloor 5/3\rfloor=1$.

\vspace{2mm}

\noindent{\bf Preprocessing.} To run a job in recoded mode, either the vertices already have their IDs numbered as $0, 1, \cdots, |V| - 1$, or we need to preprocess the graph to assign its vertices with new IDs $0, 1, \cdots, |V| - 1$. We now describe our algorithm for the preprocessing, which is essentially a GraphD job running in normal mode (and thus requires only $O(|V|/|\mathbb{W}|)$ memory on each machine).

In preprocessing, the old IDs are used as the input to $hash(.)$ called during vertex assignment and message passing. After the input graph is loaded, each machine scans array $A$ and assigns each vertex a new ID which is computed from its position in $A$. However, for each vertex $v$, the neighbor IDs in $\Gamma(v)$ (stored in $S^E$) are still the old IDs, and we need to replace them with their new IDs (which are required for sending messages in recoded mode later).

For a directed graph, recoding the IDs in adjacency lists takes 3 supersteps. Let us denote the old (resp.\ new) ID of a vertex $v$ by $id_{old}(v)$ (resp.\ $id_{new}(v)$). In Step~1, each vertex $v$ sends $id_{old}(v)$ to every out-neighbor $u\in\Gamma(v)$ asking for $id_{new}(u)$. In Step~2, a vertex $u$ responds to each requester $id_{old}(v)$ (recall that $hash(.)$ takes the old ID) by sending it $id_{new}(u)$. Finally, in Step~3, each vertex $v$ simply appends the received new neighbor IDs to a new edge stream $S^E_{rec}$, which is treated as the edge stream for streaming in recoded mode later. For an undirected graph, we skip Step~1 since a vertex $u$ can directly send $id_{new}(u)$ to each neighbor $v\in\Gamma(u)$.

The whole recoding process sends only $O(|E|)$ messages, and our experimental results in Section~\ref{sec:results} show that the preprocessing time is comparable to that of parallel graph loading from HDFS. If graph recoding is performed right after we put $G$ onto HDFS, it adds very little additional time compared with the time for putting $G$.

\vspace{2mm}

\noindent{\bf Execution in Recoded Mode.} If the vertex IDs of the original graph are already numbered as $0, 1, \cdots, |V| - 1$, our recoded mode can directly load it from HDFS. Otherwise, we need to preprocess the input graph as described above. After the graph is recoded, state array $A$ and stream $S^E_{rec}$ of each machine are already on its local disks, and thus our recoded mode simply let each machine load $A$ to memory, and stream $S^E_{rec}$ on local disk (instead of $S^E$).

Our recoded mode additionally requires users to specify an identity element $e^0$, such that when we combine $e^0$ with any message $m$, the combined message is still $m$. For example, $e^0=0$ for PageRank computation since $e^0 + m = m$; while if the operation of the combiner is to take minimum, $e^0$ can be set as $\infty$.

\vspace{2mm}

\noindent{\em In-Memory Message Digesting.} In recoded mode, $U_r$ now directly digests messages in memory. Specifically, in Step~$i$, before receiving messages, $U_r$ first creates an in-memory array with $|V(W)|$ message elements, denoted by $A_r$. Here, $A_r[pos]$ refers the combined message towards the corresponding vertex of $A[pos]$.

Each element in $A_r$ is initialized as $e^0$. When a batch of messages is received into $B_{recv}$, for each message, we compute the position of its destination vertex $u$ in array $A$ from $u$'s ID, which is $pos = \lfloor id(u)/|\mathbb{W}|\rfloor$, and then combine the message to $A_r[pos]$.

After all messages generated in Step~$i$ are received and $U_c$ starts processing Step~$(i+1)$, the corresponding vertex of $A[pos]$ is regarded as having received messages only if $A_r[pos]\neq e^0$, in which case {\em compute}({\em msgs}) is called on the vertex with {\em msgs} containing only the combined message $A_r[pos]$. When $U_c$ finishes computing Step~$(i+1)$, it frees $A_r$ from memory.

Let us define $A^{(i)}_r$ as the array $A_r$ that is created by $U_r$ for receiving messages generated in Step~$(i-1)$ and then freed by $U_c$ after it finishes computing Step~$i$. Then, two arrays of $A_r$ coexist in any superstep: in Step~$i$, $U_r$ creates $A^{(i+1)}_r$ and updates it with received messages (for use by $U_c$ in Step~$(i+1)$), while $U_c$ obtains incoming messages from $A^{(i)}_r$ for computation. Since the two arrays require $O(|V(W)|)$ memory, according to Lemma~\ref{lemma_space}, the memory bound of $O(|V|/|\mathbb{W}|)$ still holds.

\vspace{2mm}

\noindent{\em In-Memory Message Combining.} Similarly, $U_s$ always maintains an in-memory array with $max_{W\in\mathbb{W}}|V(W)|$ message elements, denoted by $A_s$, for combining outgoing messages. According to Lemma~\ref{lemma_space}, maintaining $A_s$ does not breach the memory bound of $O(|V|/|\mathbb{W}|)$.

Each element of $A_s$ is initialized as $e^0$. Recall that $U_s$ combines and sends those messages from one OMS (i.e., towards one destination machine) at a time. To combine a set of messages towards machine $W_i$, for each message that targets at a vertex $u$, $U_s$ computes its position in array $A$ of the destination machine, which is $pos = \lfloor id(u)/|\mathbb{W}|\rfloor$, and then combines the message to $A_s[pos]$.

After all messages in an OMS are combined to $A_s$, for each message element $A_s[pos]\neq e^0$, $U_s$ attach the message value with the ID of its target vertex, which is $|\mathbb{W}|\cdot pos+i$; $U_s$ then appends the target-labeled message to $B_{send}$ for sending. To guarantee that all elements of $A_s$ are $e^0$ before combining the next batch of message files, $U_s$ also sets $A_s[pos]$ back to $e^0$ after the corresponding message gets appended to $B_{send}$.

\vspace{2mm}

\noindent{\em Topology Mutation.} Topology mutation is handled similarly as described in Section~\ref{ssec:issues}, with a change for vertex addition. Specifically, in a superstep, after vertex-centric computation, $U_c$ first recodes the IDs of the newly added vertices by synchronizing with the computing units of other machines, using the same method as in preprocessing; $U_c$ then appends these recoded vertices to $A$ (implemented using STL vector). The cost of the above intra-superstep id-recoding operation is proportional to the number of vertices added.

\section{Experiments}\label{sec:results}
We evaluate the performance of GraphD by comparing it with distributed out-of-core systems Pregelix (Release 0.2.12) and HaLoop, and single-PC out-of-core systems GraphChi and X-Stream (v1.0). We also report the performance of an in-memory system, Pregel+, as a reference to measure the disk I/O overhead incurred by out-of-core execution. Pregel+ is a fair choice since it has been shown to outperform other in-memory graph systems for various algorithm-graph combinations in a recent performance study~\cite{ourExp}. The source code of our GraphD system and all the applications used in our evaluation are available from: \url{http://www.cse.cuhk.edu.hk/systems/graphd}.

All experiments were conducted on two clusters, both connected by Gigabit Ethernet. The first cluster consists of 16 commodity PCs, each with four 3.40GHz processors (Intel Core i5-4670), 8GB RAM and a 320GB disk. The PCs are connected by an unmanaged switch that provides a relatively low network speed. The second cluster consists of 15 servers, each with twelve 2.0GHz cores (two Intel Xeon E5-2620 CPUs), 48GB RAM and a 200GB disk. In addition, one server additionally has access to another 2TB disk. These servers are connected by Cisco C2960 switch which provides a relatively high network speed.

We denote the first cluster by $\mathbb{W}^{PC}$ and the second one by $\mathbb{W}^{high}$. For distributed systems, all machines in a cluster were used; while for single-PC systems, only one of the machines was used. Notably, $\mathbb{W}^{high}$ has 0.72TB memory space in total, and we use it in order to compare the out-of-core systems with the in-memory Pregel+ system running with enough memory (as the memory space of $\mathbb{W}^{PC}$ is insufficient to run Pregel+ in most graphs we tested).

\begin{table}[!t]
\centering
\caption{Graph Datasets}\label{data}
\includegraphics[width=\columnwidth]{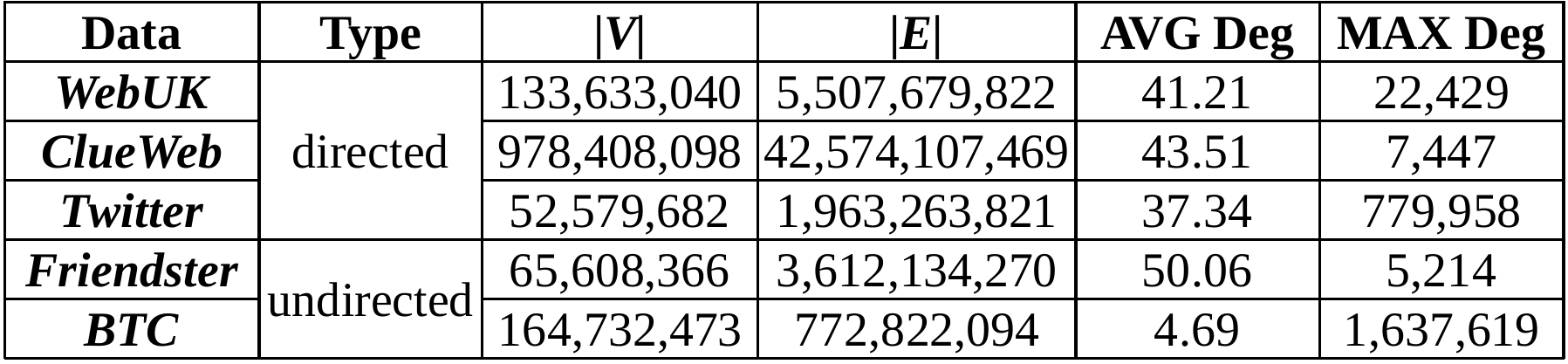}
\end{table}

\begin{table*}[t]
\centering
\caption{Performance of PageRank Computation on $\mathbb{W}^{PC}$ (time marked with $\star$: smallest among all systems)}\label{pagerank_pc}
\includegraphics[width=1.7\columnwidth]{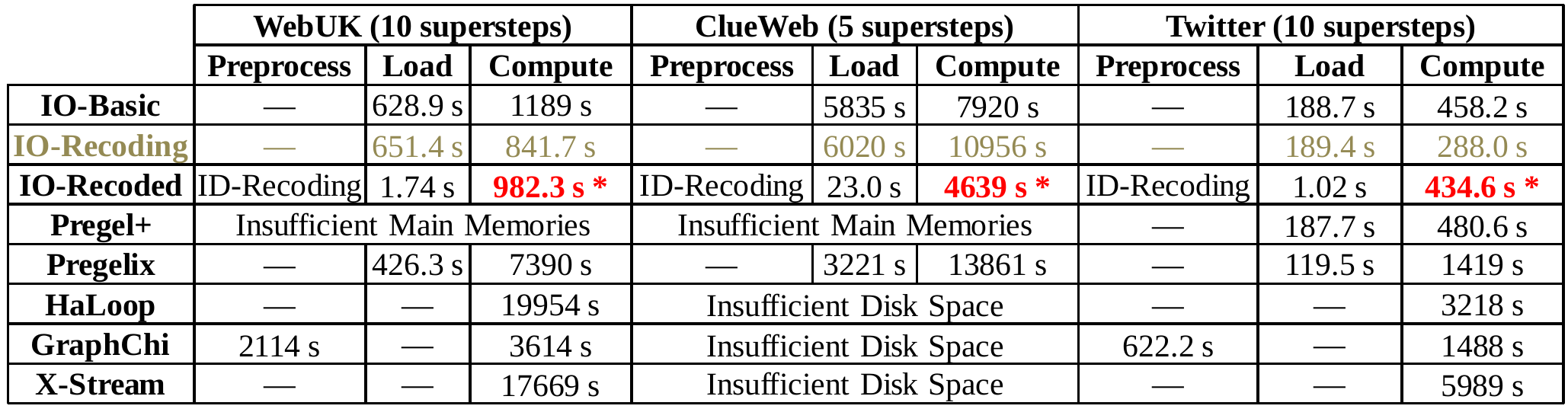}
\caption{Performance of PageRank Computation on $\mathbb{W}^{high}$ (time marked with $\star$: smallest among all systems)}\label{pagerank_high}
\includegraphics[width=1.7\columnwidth]{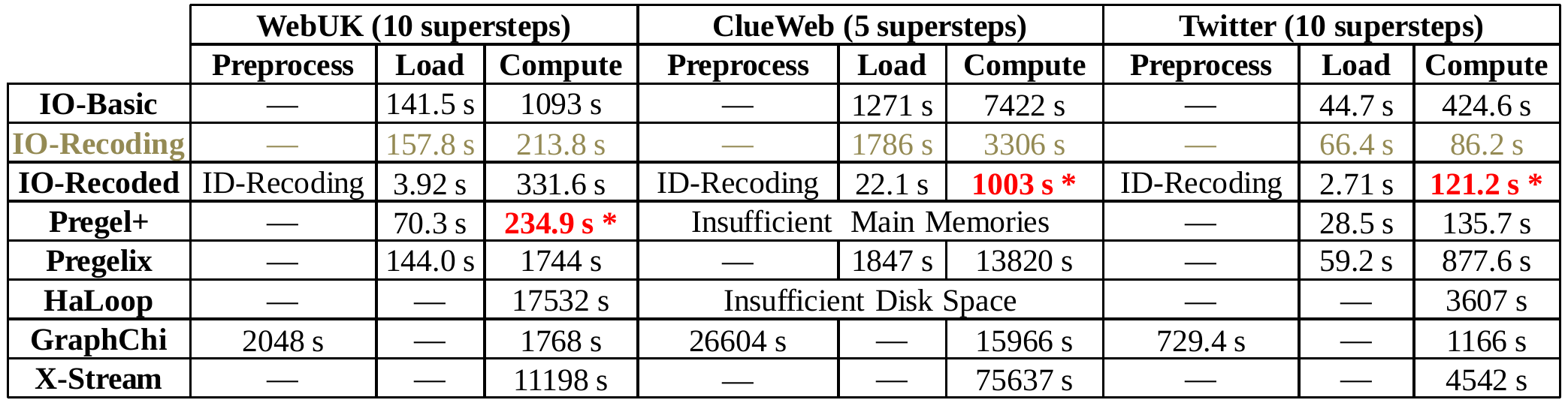}
\end{table*}

Table~\ref{data} lists the five real graph datasets that we used: two directed web graphs {\em WebUK}\footnote{\scriptsize http://law.di.unimi.it/webdata/uk-union-2006-06-2007-05} and {\em ClueWeb}\footnote{\scriptsize http://law.di.unimi.it/webdata/clueweb12}; two social networks {\em Twitter}\footnote{\scriptsize http://konect.uni-koblenz.de/networks/twitter\_mpi} and {\em Friendster}\footnote{\scriptsize http://snap.stanford.edu/data/com-Friendster.html}; and an RDF graph {\em BTC}\footnote{\scriptsize http://km.aifb.kit.edu/projects/btc-2009/}.

Notably, {\em ClueWeb} has 42 billion edges and its input file size exceeds 400GB, and thus single-PC systems can only process it on the machine of $\mathbb{W}^{high}$ that has access to the 2TB disk.

Three Pregel algorithms were used in our evaluation: PageRank and single-source shortest path (SSSP) computation~\cite{pregel} and the Hash-Min algorithm of~\cite{ppa} for computing connected components.

\vspace{2mm}

\noindent{\bf Performance of PageRank.} The experiments were ran on the three directed graphs shown in Table~\ref{data}. We only ran 10 iterations on {\em WebUK} and {\em Twitter} and 5 supersteps on {\em ClueWeb}, since each iteration takes roughly the same time, and while it is efficient for GraphD and Pregel+, it is time-consuming for all the other out-of-core systems that we compared with.

Table~\ref{pagerank_pc} (resp.\ Table~\ref{pagerank_high}) reports the running time of various systems on $\mathbb{W}^{PC}$ (resp.\ $\mathbb{W}^{high}$), where row [IO-Basic] (resp.\ row [IO-Recoded]) reports the performance of the normal mode (resp.\ recoded mode) of GraphD. Row [IO-Recoding] reports the preprocessing time of ID recoding, and we use grey font to differentiate it from other rows that refer to PageRank computation. The other rows report the performance of the systems we compared with, whose header names are self-explanatory.

Column [Load] refers to the time of graph loading. For IO-Basic, Pregel+ and Pregelix, the time is for loading from HDFS; while for IO-Recoded, the time is for loading from local disks (each machine simply loads the recoded state array $A$). In all our tables, an entry ``--'' means ``not applicable''. HaLoop has no loading time since it scans the graph on HDFS in every iteration, and neither do single-PC systems which scan the graph on the local disk in every iteration. Moreover, GraphChi needs to preprocess a graph first by partitioning it into shards, whose time is reported in Column [Preprocess]. Finally, Column [Compute] reports the total time of iterative computation.

From Tables~\ref{pagerank_pc} and~\ref{pagerank_high}, we can see that computation on $\mathbb{W}^{high}$ is much faster than on $\mathbb{W}^{PC}$ due to the more powerful machines and the faster switch. Also, the time of IO-Recoding is consistently less than twice of the data loading time, and is thus an efficient preprocessing. IO-Recoded only slightly improves the performance of IO-Basic on $\mathbb{W}^{PC}$, since $\mathbb{W}^{PC}$ has a lower network bandwidth that forms the bottleneck, and the cost of merge-sort in IO-Basic is mostly hidden inside the cost of message transmission. In contrast, significant improvement is observed on $\mathbb{W}^{high}$ (e.g., over 7 times on {\em ClueWeb}), since IO-Recoded eliminates merge-sort whose cost cannot be fully hidden when the network bandwidth is higher.

As Table~\ref{pagerank_pc} shows, Pregel+ can only process {\em Twitter} on $\mathbb{W}^{PC}$ due to its limited RAM space, and it is even slightly slower than IO-Basic and IO-Recoded. This is because, network bandwidth is the bottleneck in $\mathbb{W}^{PC}$ rather than disk IO, and GraphD's parallel framework fully hides the computation cost inside the communication cost; while in Pregel+'s implementation, message transmission starts after computation finishes (i.e., all messages are generated).

In contrast, as Table~\ref{pagerank_high} shows, Pregel+ is faster than IO-Basic on $\mathbb{W}^{high}$ for both {\em WebUK} and {\em Twitter} since the cost of merge-sort in IO-Basic is not hidden. However, ID-Recoded still beats Pregel+ on {\em Twitter} since the high parallelism of GraphD's execution framework hides the cost of streaming $S^E$ and OMSs.

Among the other systems, Pregelix is much slower than IO-Basic since it performs costly relational operations, and X-Stream is generally much slower than GraphChi as was also observed by~\cite{venus}. However, preprocessing in GraphChi is expensive: sharding {\em ClueWeb} takes 26604 seconds on $\mathbb{W}^{high}$, for which IO-Recoded can already finish over 100 supersteps. Finally, HaLoop is sometimes even slower than X-Stream even though HaLoop uses all machines.

Among the 6 data-cluster combinations reported by Table~\ref{pagerank_pc} and Table~\ref{pagerank_high}, IO-Recoded beats Pregel+ (and is the fastest system) in 5 of them, which is quite amazing given that GraphD is an out-of-core system while Pregel+ is an in-memory system.

\begin{table}[t]
\vspace{-4mm}
\centering
\caption{Message Generation v.s.\ Message Transmission}\label{pagerank_comp}
\includegraphics[width=1.035\columnwidth]{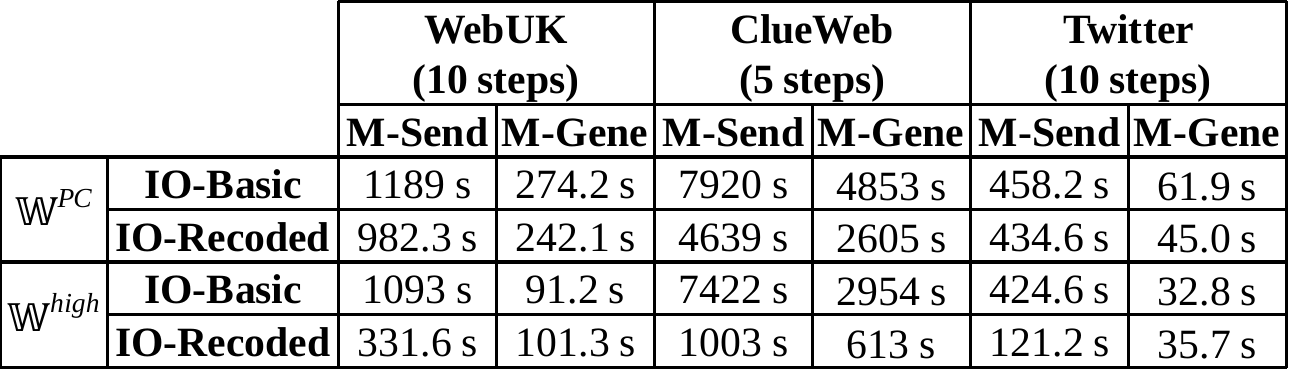}
\vspace{-4mm}
\end{table}

\begin{table*}[t]
\begin{minipage}[t]{0.45\linewidth}
\centering
\caption{Performance of Hash-Min on $\mathbb{W}^{PC}$}\label{hashmin_pc}
\includegraphics[width=1.111\columnwidth]{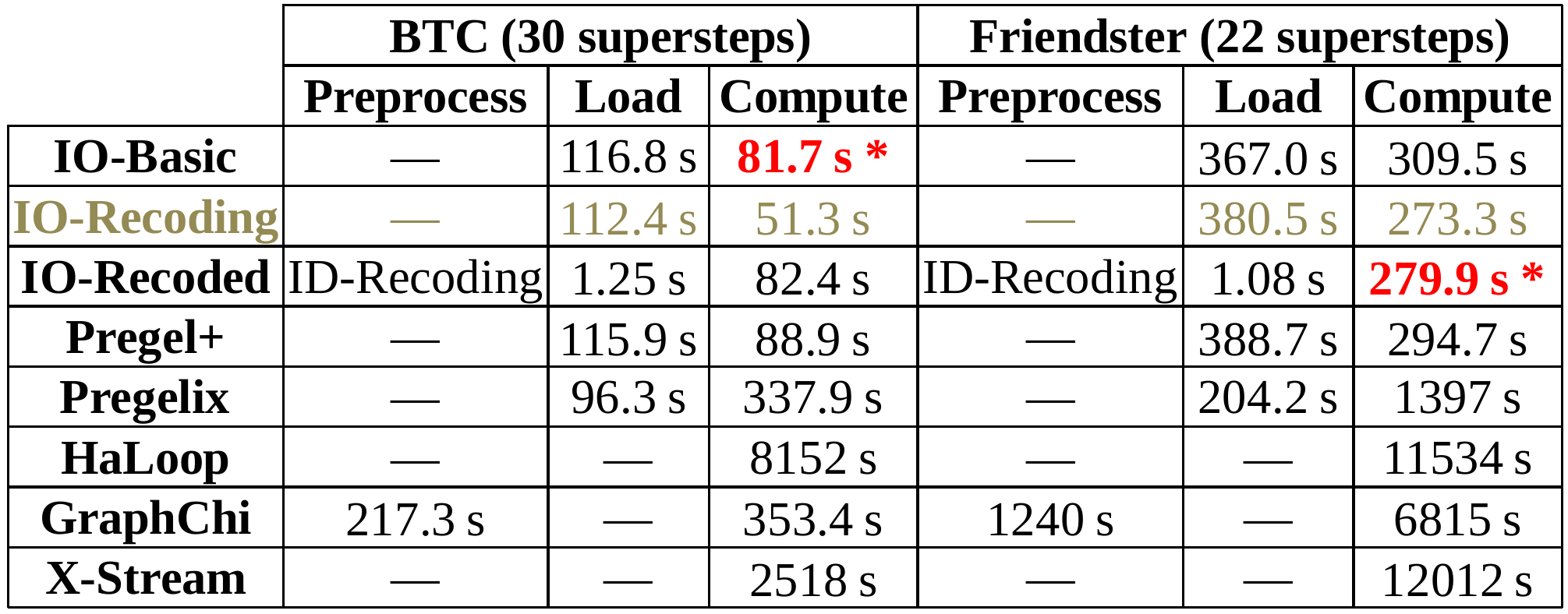}
\end{minipage}
\hfill
\begin{minipage}[t]{0.5\linewidth}
\centering
\caption{Performance of Hash-Min on $\mathbb{W}^{high}$}\label{hashmin_high}
\includegraphics[width=\columnwidth]{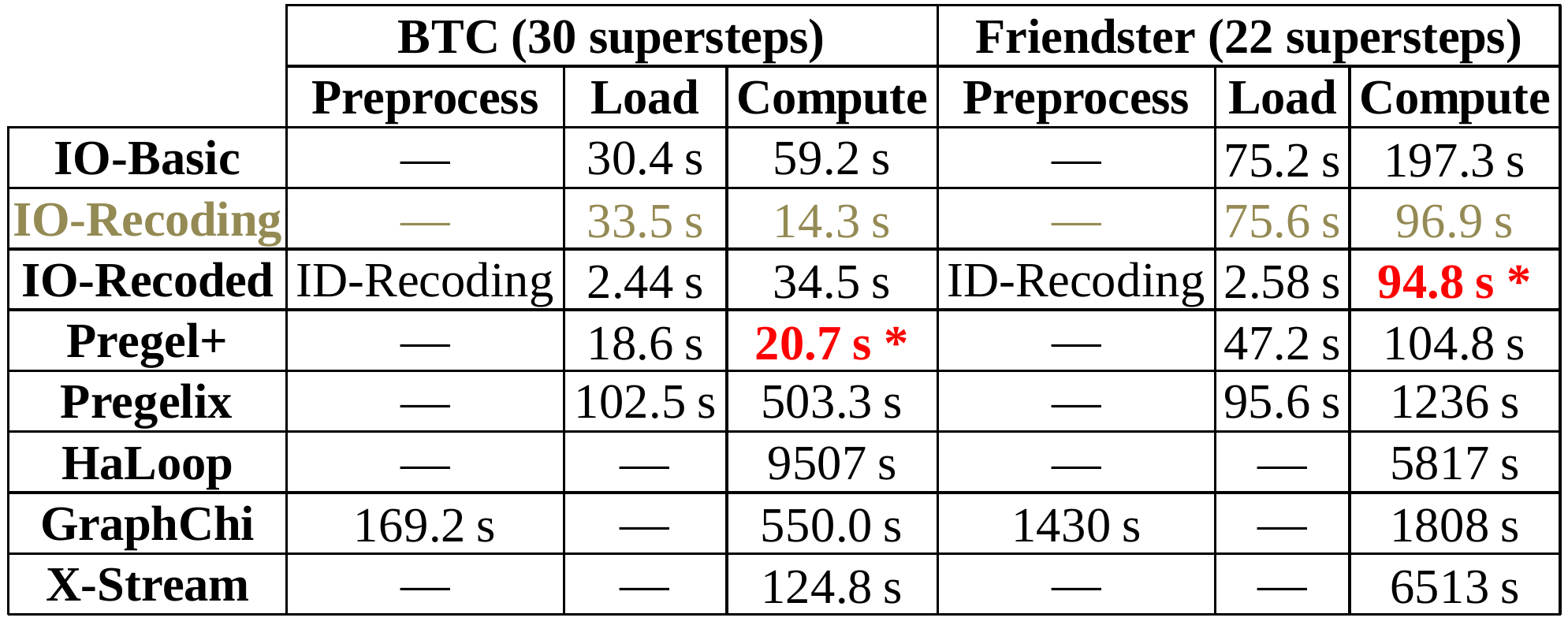}
\end{minipage}
\end{table*}

\vspace{2mm}

\noindent{\bf Message Generation and Transmission Costs.} Table~\ref{pagerank_comp} shows the time taken by both IO-Basic and IO-Recoded to transmit messages (Column [M-Send]), and the fraction of time that $U_c$ spent on generating messages (Column [M-Gene]), for the previous experiments on PageRank computation. Since the behavior of $U_c$ on different machines may vary, we only report the time for $U_c$ on the first machine, and the time is summed over the vertex-centric computation time of all the 10 (or 5) supersteps.

Since network bandwidth is the bottleneck, we can see from Table~\ref{pagerank_comp} that in all the 6 data-cluster combinations, message transmission happens during the whole period of each superstep, but $U_c$ only computes in the early stage (often less than half) of each superstep.

\vspace{2mm}

\noindent{\bf Performance of Hash-Min.} The experiments were ran on the two undirected graphs shown in Table~\ref{data}, and the results are reported in Tables~\ref{hashmin_pc} and~\ref{hashmin_high}. Similar to the experiments on PageRank computation, we can observe that IO-Basic, IO-Recoded and Pregel+ have similar performance on $\mathbb{W}^{PC}$ whose network bandwidth is low, and IO-Recoded even beats Pregel+ over {\em Friendster} on $\mathbb{W}^{high}$.

The computation workload of Hash-Min is typically as follows: most vertices perform computation in the first few supersteps, but as computation goes on, less and less vertices perform computation in a superstep, making the computation workload very sparse. Sparse workload is not a problem for Pregel+ since all adjacency lists are in memories; meanwhile, GraphD is also able to avoid accessing many useless adjacency lists with the help of its streaming function {\em skip}({\em num\_items}) which we introduced in Section~\ref{ssec:edgestream}. However, the other out-of-core systems do not have effective support for sparse workload, and thus as Tables~\ref{hashmin_pc} and~\ref{hashmin_high} show, their computation times are much longer than those of GraphD and Pregel+.

\begin{table*}[t]
\centering
\caption{Performance of SSSP Computation on $\mathbb{W}^{PC}$}\label{sssp_pc}
\includegraphics[width=2.1\columnwidth]{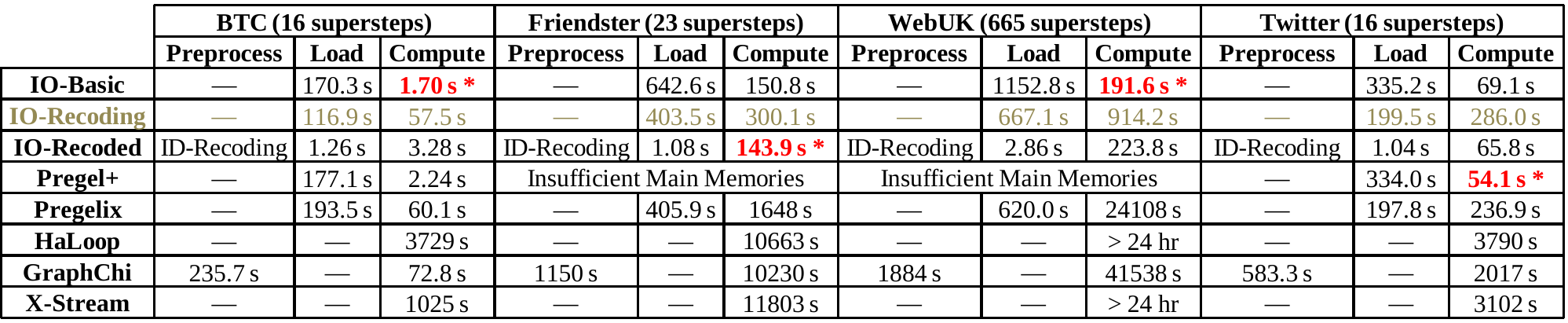}
\caption{Performance of SSSP Computation on $\mathbb{W}^{high}$}\label{sssp_high}
\includegraphics[width=2.1\columnwidth]{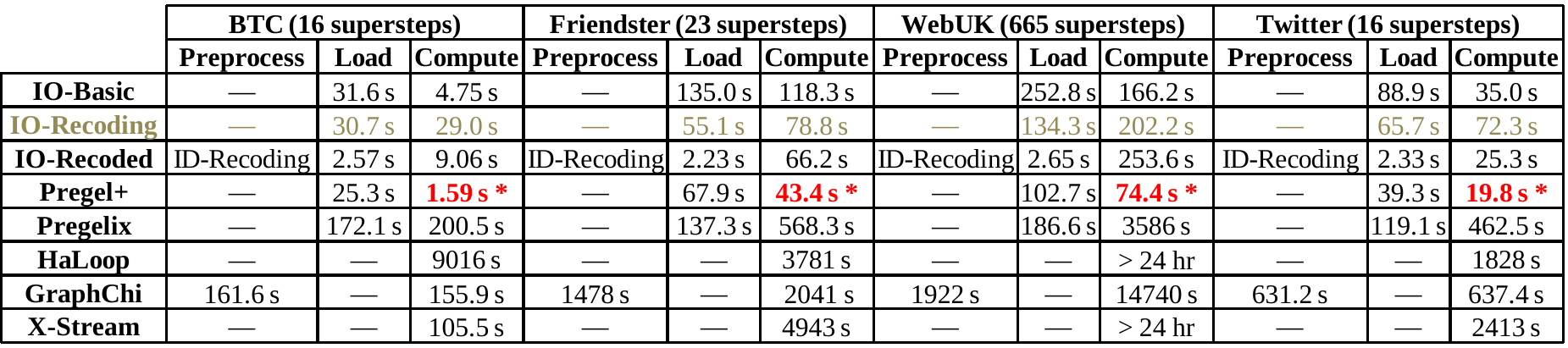}
\end{table*}

\vspace{2mm}

\noindent{\bf Performance of SSSP.} The experiments were ran on the graphs in Table~\ref{data} except for {\em ClueWeb}, for which we could not find a source vertex that can reach to a relatively large amount of other vertices after a long period of trials. All edges were given weight 1, and thus the computation is essentially breadth-first search (BFS).

Unlike PageRank computation and the Hash-Min algorithm discussed before, the computation workload of every superstep is sparse for BFS (or more generally, SSSP). To see this, consider BFS, where a vertex will only send messages to its neighbors when it is reached from the source vertex for the first time. Since every vertex sends messages along adjacent edges for only once during the whole period of computation, the total workload is merely $O(|E|)$, which amounts to the workload of just one superstep in PageRank computation. We remark that BFS (or more generally, SSSP) represents the class of Pregel algorithms that are the most challenging to out-of-core systems which scan disk-resident graphs.

The experimental results are reported in Tables~\ref{sssp_pc} and~\ref{sssp_high}, where we can see that Pregel+ beats all the out-of-core systems in 6 out of the 8 data-cluster combinations. This is, however, not surprising since Pregel+ keeps all adjacency lists in memories. GraphD is not much slower than Pregel+, and even won in 2 data-cluster combinations, thanks to the use of streaming function {\em skip}({\em num\_items}).

Surprisingly, on {\em BTC} and {\em WebUK}, IO-Basic even outperforms IO-Recoded. This is because, if there are too few messages to send in each superstep, the overhead of manipulating the additional arrays (i.e. $A_r$ and $A_s$) in recoded mode backfires. Note that all computations on {\em BTC} finished in seconds for both mode of GraphD, whose workload is really low. While computations on {\em WebUK} took longer time, this is mainly because of the large number of supersteps (i.e., 665). After all, IO-Recode needs to create/update/free those large additional arrays for 665 supersteps.

Also surprisingly, on {\em WebUK},  Pregelix is over two orders of magnitude slower than GraphD on $\mathbb{W}^{PC}$. We found that Pregelix incurs a fixed cost of at least 35 seconds for each superstep, while a superstep of IO-Basic can be as low as 0.02--0.03 seconds. In contrast, Pregelix is much faster on $\mathbb{W}^{high}$ due to faster network, and the fixed cost for a superstep is reduced to 3--4 seconds.

Tables~\ref{sssp_pc} and~\ref{sssp_high} also show that X-Stream is impractical for jobs that run many iterations of sparse-workload vertex computation, since it needs to stream all edges in each iteration. For example, X-Stream could not finish on {\em WebUK} in both $\mathbb{W}^{PC}$ and $\mathbb{W}^{high}$ after a whole day. In fact, the authors of X-Stream themselves admitted this problem at the end of Section~5.3 in~\cite{xstream}.

Finally, graph loading in IO-Recoding is faster than IO-Basic in Tables~\ref{sssp_pc} and~\ref{sssp_high}. This is because during IO-Recoding, $S^{E}$ does not include edge weights. We only attach edge weights when we append recoded adjacency list items to $S^E_{rec}$.

\section{Conclusions}\label{sec:conclude}
We presented an efficient Pregel-like system, called GraphD, for processing very large graphs in a small cluster with ordinary computing resources that are available to most users. To process a graph $G = (V,E)$ with $n$ machines using GraphD, we proved that each machine only requires $O(|V|/n)$ memory space.

While sparse computation workload is not well supported by all previous out-of-core systems, GraphD adopts a new streaming function {\em skip}({\em num\_items}) to handle sparse computation workload efficiently, while attaining sequential I/O bandwidth when the computation workload becomes dense.

For the common cluster setting where machines are connected with Gigabit Ethernet, GraphD fully overlaps computation with communication by buffering outgoing messages to local disks, whose cost is, in turn, hidden inside the cost of message transmission.

When message combining is applicable, GraphD further uses an effective ID-recoding technique to eliminate the need of any expensive external-memory operations such as merge-sort, achieving almost the minimum possible I/O cost that can be expected from any out-of-core Pregel-like system which streams edges and messages on secondary storage.

Open-source implementation of GraphD is provided, and extensive experiments were conducted showing that GraphD's performance is order of magnitude faster than existing out-of-core systems, and is competitive even when compared with an in-memory Pregel-like system running with sufficient memory.

{\small
\bibliographystyle{abbrv}
\bibliography{ref_graphd}
}

\end{document}